\newcommand{\algname}{\textsc{AutoVC }}
\newcommand{\algnamens}{\textsc{AutoVC}}
\newcommand{\algnameonehotns}{\textsc{AutoVC-one-hot}}
\newtheorem{theorem}{Theorem}
\newtheorem{lemma}{Lemma}[theorem]
\DeclareMathOperator*{\plim}{plim}
\icmltitlerunning{\algnamens:  Zero-Shot Voice Style Transfer with Only Autoencoder Loss}
\begin{document}

\twocolumn[
\icmltitle{\algnamens:  Zero-Shot Voice Style Transfer with Only Autoencoder Loss}



\icmlsetsymbol{equal}{*}

\begin{icmlauthorlist}
\icmlauthor{Kaizhi Qian}{equal,uiuc}
\icmlauthor{Yang Zhang}{equal,mitibm,ibm}
\icmlauthor{Shiyu Chang}{mitibm,ibm}
\icmlauthor{Xuesong Yang}{uiuc}
\icmlauthor{Mark Hasegawa-Johnson}{uiuc}
\end{icmlauthorlist}

\icmlaffiliation{mitibm}{MIT-IBM Watson AI Lab, Cambridge, MA, USA}
\icmlaffiliation{ibm}{IBM Research, Cambridge, MA, USA}
\icmlaffiliation{uiuc}{University of Illinois at Urbana-Champaign, IL, USA}

\icmlcorrespondingauthor{Kaizhi Qian}{kqian3@illinois.edu}

\icmlkeywords{Style Transfer, Voice Conversion, Autoencoder, Zero-Shot Learning}

\vskip 0.3in
]



\printAffiliationsAndNotice{\icmlEqualContribution} 

\begin{abstract}
Non-parallel many-to-many voice conversion, as well as zero-shot voice conversion, remain under-explored areas. Deep style transfer algorithms, such as generative adversarial networks (GAN) and conditional variational autoencoder (CVAE), are being applied as new solutions in this field. However, GAN training is sophisticated and difficult, and there is no strong evidence that its generated speech is of good perceptual quality. On the other hand, CVAE training is simple but does not come with the distribution-matching property of a GAN. In this paper, we propose a new style transfer scheme that involves only an autoencoder with a carefully designed bottleneck. We formally show that this scheme can achieve distribution-matching style transfer by training only on a self-reconstruction loss. Based on this scheme, we proposed \algnamens, which achieves state-of-the-art results in many-to-many voice conversion with non-parallel data, and which is the first to perform zero-shot voice conversion.

\end{abstract}

\section{Introduction}
\label{sec:intro}
The idea of speaking in someone else's voice never fails to be a fascinating element in action and fiction movies, and it also finds its way to many practical applications, \emph{e.g.} privacy and identity protection, creative industry \emph{etc.}  In the speech research community, this task is referred to as the voice conversion problem, which involves modifying a given speech from a source speaker to match the vocal qualities of a target speaker.

Despite the continuing research efforts in voice conversion, three problems remain under-explored. First, most voice conversion systems assume the availability of parallel training data, \emph{i.e.} speech pairs where the two speakers utter the same sentences.  Only a few can be trained on non-parallel data. Second, among the few existing algorithms that work on non-parallel data, even fewer can work for many-to-many conversion, \emph{i.e.} converting from multiple source speakers to multiple target speakers. Last but not least, no voice conversion systems are able to perform zero-shot conversion, \emph{i.e.} conversion to the voice of an unseen speaker by looking at only a few of his/her utterances. 

With the recent advances in deep style transfer, the traditional voice conversion problem is being recast as a style transfer problem, where the vocal qualities can be regarded as styles, and speakers as domains. There are many style transfer algorithms that do not require parallel data, and are applicable to multiple domains, so they are readily available as new solutions to voice conversion. In particular, generative adversarial network (GAN) \cite{goodfellow2014generative} and conditional variational autoencoder (CVAE) \cite{kingma2013auto, kingma2014semi}, are gaining popularity in voice conversion.

However, neither GAN nor CVAE is perfect. GAN comes with a nice theoretical justification that the generated data would match the distribution of the true data, and has achieved state-of-the-art results, particularly in computer vision.  However, it is widely acknowledged that GAN is very hard to train, and its convergence property is fragile.  Also, although there is an increasing number of works that introduce GAN to speech generation \cite{donahue2018adversarial} and speech domain transfer \cite{pascual2017segan, subakan2018generative, fan2018svsgan, hosseini2018multi}, there is no strong evidence that the generated speech \textbf{\textit{sounds}} real.  Speech that is able to fool the discriminators has yet to fool human ears.  On the other hand, CVAE is easier to train.  All it needs to do is to perform self-reconstruction and maximize a variational lower bound of the output probability. The intuition is to infer a hypothetical style-independent hidden variable, which is then combined with the new style information to generate the style-transferred output. However, CVAE alone does not guarantee distribution matching, and often suffers from over-smoothing of the conversion output \cite{kameoka2018stargan}.

Due to the lack of a suitable style transfer algorithm, existing voice conversion systems have yet to produce satisfactory results, which naturally leads to the following formulation of the problem. Is there a style transfer algorithm that can be proven to match the distribution as GAN does, that trains as easily as CVAE, and that works better for speech?

Motivated by this, in this paper, we propose a new style transfer scheme, which involves only a \textbf{\textit{vanilla}} autoencoder with a carefully designed bottleneck.  Similar to CVAE, the proposed scheme only needs to be trained on the self-reconstruction loss, but it has a distribution matching property similar to GAN's. This is because the correctly-designed bottleneck will learn to remove the style information from the source and get the style-independent code, which is the goal of CVAE, but which the training scheme of CVAE is unable to guarantee.

Based on this scheme, we propose \algnamens, a many-to-many voice style transfer algorithm without parallel data. \algname follows the autoencoder framework and is trained only on autoencoder loss, but it introduces carefully-tuned dimension reduction and temporal downsampling to constrain the information flow. As we will show, this simple scheme leads to a significant performance gain. \algname achieves superior performance on a traditional many-to-many conversion task, where all the speakers are seen in the training set. Also, equipped with a speaker embedding trained for speaker verification \cite{heigold2016end, wan2018generalized}, \algname is among the first to perform zero-shot voice conversion with decent performance. Considering the quality of the results and the simplicity of its training scheme, \algname opens a new path towards a simpler and better voice conversion and general style transfer systems. The implementation will become publicly available. 

\section{Related Works}
\label{sec:realted}
There are several works that perform non-parallel many-to-many voice conversion using VAE and its combination with adversarial training. VAE-VC \cite{hsu2016voice} is a simple voice conversion system using VAE. Afterward, much research focuses on removing the style information from the VAE code. VAW-GAN \cite{hsu2017voice} introduces a GAN on the VAE output. CDVAE-VC \cite{huang2018voice} introduces two VAEs on two spectral features and forces the latent codes of the two features to contain similar information. ACVAE-VC \cite{kameoka2018acvae} introduces an auxiliary classifier on the output to encourage the conversion results to be correctly classified as the target speaker's utterances. \citet{chou2018multi} introduce a classifier on the code and a GAN on the output. Similarly, StarGAN \cite{kaneko2017parallel} and CycleGAN \cite{zhu2017unpaired}, which consist of encoder-decoder architectures with GAN, are applied to voice conversion \cite{kameoka2018stargan, fang2018high}. GAN alone is also applied to voice conversion \cite{gao2018voice}. However, the conversion quality of these algorithms is still limited. Text transcriptions are introduced to assist the learning of the latent code \cite{xie2016kl, saito2018non, biadsy2019parrotron}, but we will focus on voice conversion without text transcriptions, which is more flexible for low-resourced languages.

\citet{atalla2019look, chou2018multi, nachmani2019unsupervised} conduct research on style transfer using autoencoder, but none has unveiled its distribution-matching property by properly designing the bottleneck.

\section{Style Transfer Autoencoder}
\label{sec:framework}
In this section, we will discuss how and why an autoencoder can match the data distribution as GAN does. Although our intended application is voice conversion, the discussion in this section is applicable to other style transfer applications as well.  As general mathematical notations, upper-case letters, \emph{e.g.} $X$, denote random variables/vectors; lower-case letters, \emph{e.g.} $x$, denote deterministic values or instances of random variables; $X(1:T)$ denotes a random process, with $(1:T)$ denoting a collection of time indices running from $1$ to $T$. For notational ease, sometimes the time indices are omitted to represent the collection of the random process at all times. $p_X(\cdot | Y)$ denotes the probability mass function (PMF) or probability density function (PDF) of $X$ conditional on $Y$; $p_X(\cdot | Y=y)$, or sometimes $p_X(\cdot | y)$ without causing confusions, denotes the PMF/PDF of $X$ conditional on $Y$ taking a specific value $y$; similarly, $\mathbb{E}[X|Y]$, $\mathbb{E}[X|Y=y]$ and $\mathbb{E}[X|y]$ denote the corresponding conditional expectations. It is worth mentioning that $\mathbb{E}[X|Y]$ is still a random, but $\mathbb{E}[X|Y=y]$ or $\mathbb{E}[X|y]$. $H(\cdot)$ denotes the entropy, and $H(\cdot | \cdot)$ denotes the conditional entropy. 

\subsection{Problem Formulation}
\label{subsec:formulation}

Assume that speech is generated by the following stochastic process. First, a speaker identity $U$ is a random variable drawn from the speaker population $p_U(\cdot)$. Then, a content vector $Z=Z(1:T)$ is a random process drawn from the joint content distribution $p_Z(\cdot)$. Here content refers to the phonetic and prosodic information. Finally, given the speaker identity and content, the speech segment $X=X(1:T)$ is a random process randomly sampled from the speech distribution, \emph{i.e.} $p_{X}(\cdot| U, Z)$, which characterizes the distribution of the speaker $U$'s speech uttering the content $Z$. $X(t)$ can represents a sample of a speech waveform, or a frame of a speech spectrogram. In this paper, we will be working on speech spectrograms. Here, we assume that each speaker produces the same amount of gross information, \emph{i.e.}
\begin{equation}
\small
    H(X | U = u) = h_\textrm{speech} = \textrm{constant},
    \label{eq:constant_info}
\end{equation}
regardless of $u$.

Now, assume two sets of variables, $(U_1, Z_1, X_1)$ and $(U_2, Z_2, X_2)$, are independent and identically distributed (i.i.d.) random samples generated from this process. $(U_1, Z_1, X_1)$ belong to the \emph{source speaker} and $(U_2, Z_2, X_2)$ belong to the \emph{target speaker}. Our goal is to design a speech converter that produces the conversion output, $\hat{X}_{1\rightarrow 2}$, which preserves the content in $X_1$, but matches the speaker characteritics of speaker $U_2$. Formally, an ideal speech converter should have the following desirable property:
\begin{equation}
\small
\begin{aligned}
 p_{\hat{X}_{1\rightarrow 2}}(\cdot | U_2 = u_2, Z_1 = z_1) = p_X(\cdot | U=u_2, Z=z_1).
    \label{eq:ideal}
\end{aligned}
\end{equation}
Eq.~\eqref{eq:ideal} means that given the target speaker's identity $U_2=u_2$ and the content in the source speech $Z_1 = z_1$, the converted speech should sound like $u_2$ uttering $z_1$.

When $U_1$ and $U_2$ are both seen in the training set, the problem is a standard multi-speaker conversion problem, which has been addressed by some existing works. When $U_1$ or $U_2$ is not included in the training set, the problem becomes the more challenging zero-shot voice conversion problem, which is also a target task of the proposed \algnamens.  

\begin{figure}[t!]
    \centering
    \subfigure[Conversion]{\includegraphics[width=0.48\columnwidth]{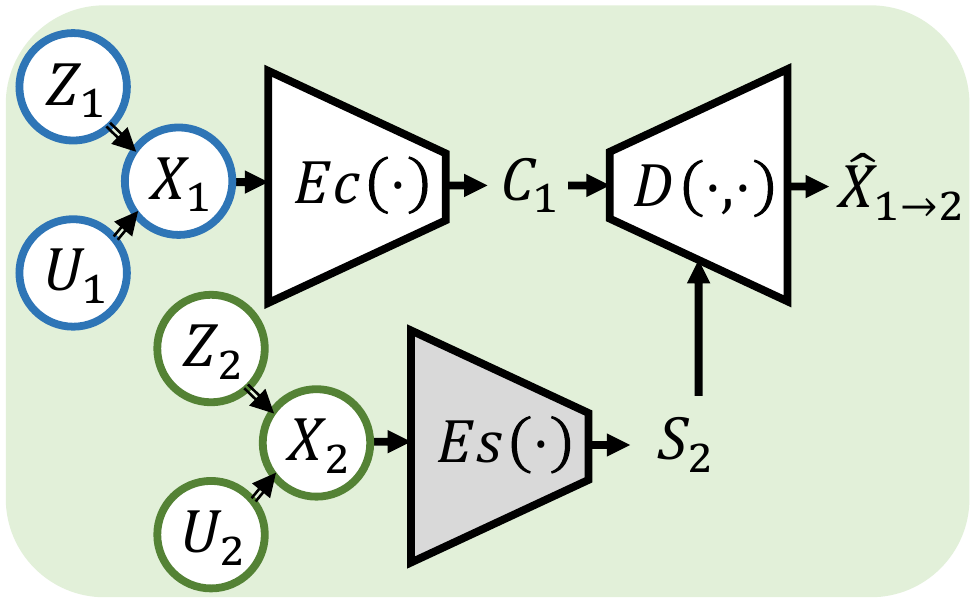}}
    \subfigure[Training]{\includegraphics[width=0.48\columnwidth]{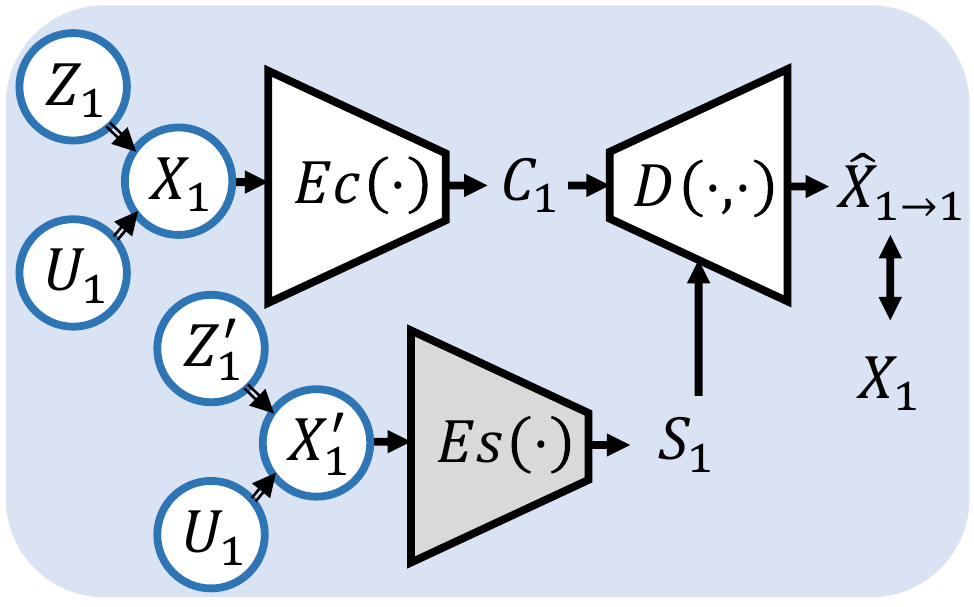}}
    \vspace*{-0.1in}
    \caption{The style transfer autoencoder framework. The ovals denote the probabilistic graphical model of the speech generation process. The grey boxes denote pre-trained modules. (a) During conversion, the source speech is fed to the content encoder. An utterance of the target speaker is fed to the speaker encoder. The decoder produces the conversion results. (b) During training, the source speech is fed to the content encoder. Another utterance of the same \emph{source} speaker is fed to the speaker encoder. The content encoder and the decoder minimize the self-reconstruction error.}
    \label{fig:framework}
    \vspace*{-0.05in}
\end{figure}

\subsection{The Autoencoder Framework}

\algname solves the voice conversion problem with a very simple autoencoder framework, as shown in Fig.~\ref{fig:framework}. The framework consists of three modules, a content encoder $E_c(\cdot)$ that produces a content embedding from speech, a speaker encoder $E_s(\cdot)$ that produces a speaker embedding from speech, and a decoder $D(\cdot, \cdot)$ that produce speech from content and speaker embeddings. The inputs to these modules are different for conversion and training. 

\paragraph{Conversion:} As shown in Fig.~\ref{fig:framework}(a), during the actual conversion, the source speech $X_1$ is fed into the content encoder to have content information extracted. The target speech is fed into the speaker encoder to provide target speaker information. The decoder produces the converted speech based on the content information in the source speech and the speaker information in the target speech.

\begin{equation}
\small
\begin{aligned}
C_1 = E_c(X_1),  \quad S_2 = E_s(X_2), \quad \hat{X}_{1\rightarrow2} = D(C_1, S_2).
\end{aligned}
\label{eq:convert}
\end{equation}
Here $C_1$ and $\hat{X}_{1\rightarrow2}$ are both random processes. $S_2$ is simply a random vector.

\paragraph{Training:} Throughout the paper, we will assume the speaker encoder is already pre-trained to extract some form of speaker dependent embedding, so by training we refer to the training of the content encoder and the decoder. As shown in Fig.~\ref{fig:framework}(b), since we do not assume the availability of parallel data, only self-reconstruction is needed for training. More specifically, the input to the content encoder is still $X_1$, but the input to the style encoder becomes an utterance from the same speaker $U_1$, denoted as $X'_1$.\footnote{$X_1'$ and $X_1$ can be the same or different.} Then for each input speech $X_1$, \algname learns to reconstruct itself: 
\begin{equation}
\small
\begin{aligned}
C_1 = E_c(X_1),  \quad S_1 = E_s(X'_1), \quad \hat{X}_{1\rightarrow1} = D(C_1, S_1).
\end{aligned}
\label{eq:training}
\end{equation}

The loss function to minimize is simply the weighted combination of the self-reconstruction error and the content code reconstruction error, \emph{i.e.}

\begin{equation}
\small
    \min_{E_c(\cdot), D(\cdot, \cdot)} L = L_\textrm{recon} + \lambda L_\textrm{content},
    \label{eq:loss}
\end{equation}
where
\begin{equation}
\small
    \begin{aligned}
    &L_\textrm{recon} = \mathbb{E} [\lVert \hat{X}_{1\rightarrow1} - X_1 \rVert_2^2], \\
    &L_\textrm{content} = \mathbb{E} [\lVert E_c(\hat{X}_{1\rightarrow1}) - C_1 \rVert_1].
    \end{aligned}
    \label{eq:loss_detail}
\end{equation}
As it turns out, this simple training scheme is sufficient to produce the ideal distribution-matching voice conversion, as will be shown in the next section. 

\subsection{Why does it work?}
\begin{figure*}
\centering
\subfigure[Bottleneck too wide]{\includegraphics[width=0.23\textwidth]{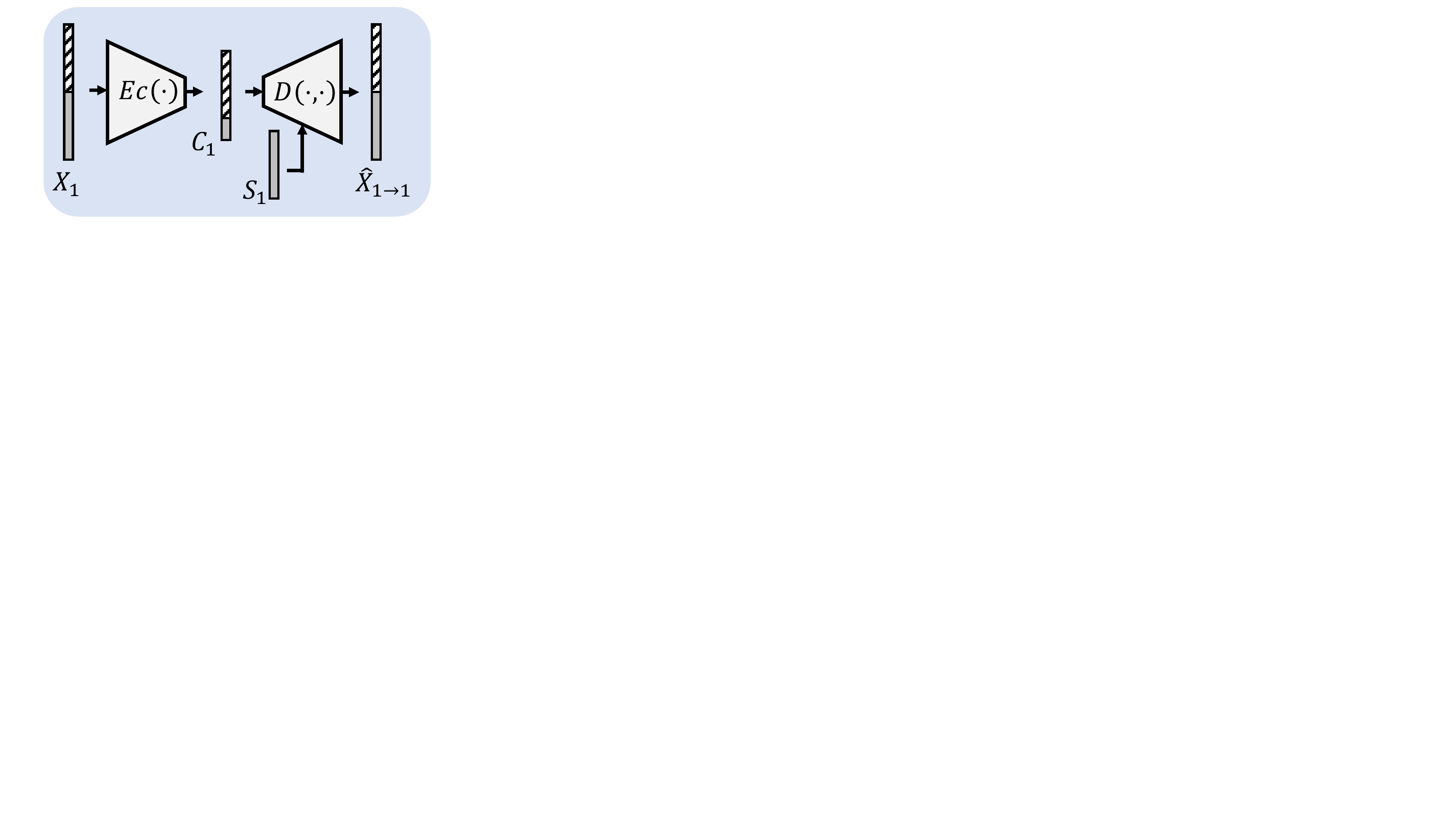}}
\subfigure[Bottleneck too narrow]{\includegraphics[width=0.23\textwidth]{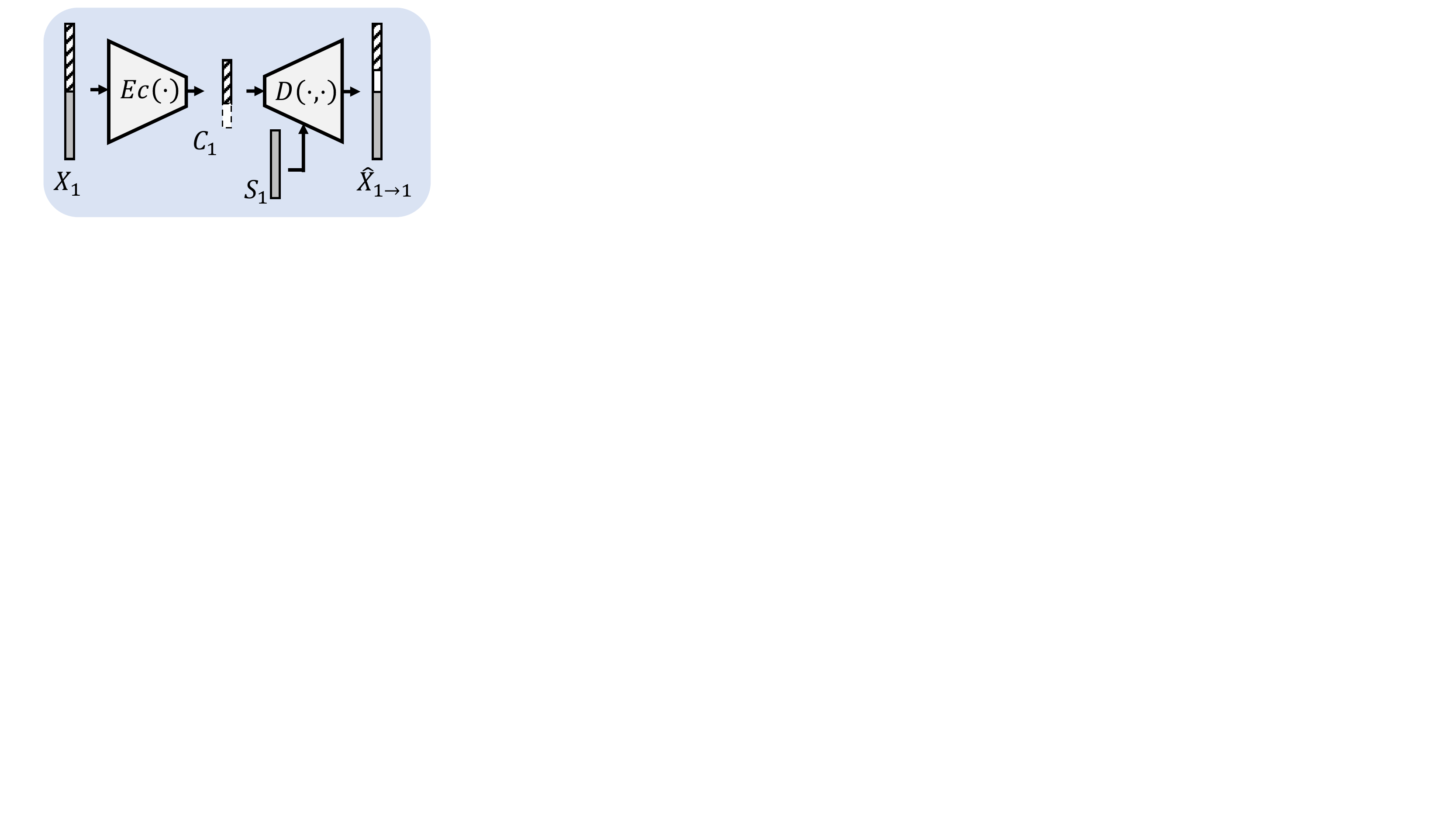}}
\subfigure[Bottleneck just right]{\includegraphics[width=0.23\textwidth]{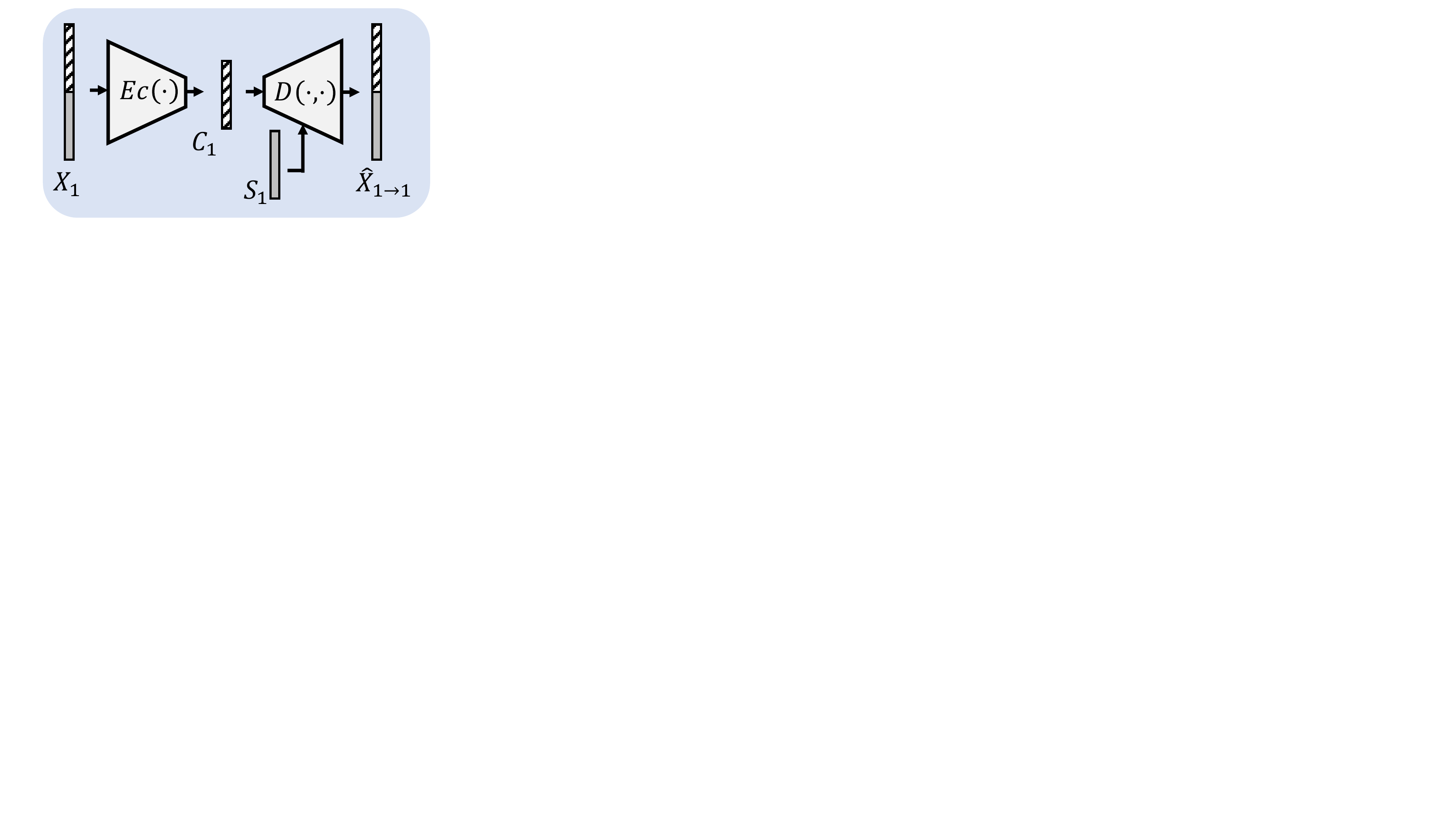}}
~
\subfigure[Conversion]{\includegraphics[width=0.26\textwidth]{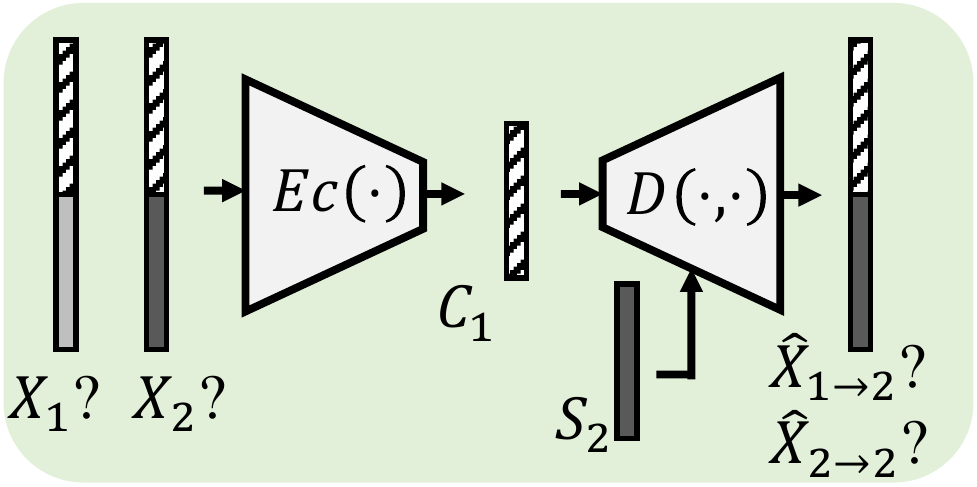}}
\vspace*{-0.1in}
\caption{An intuitive explanation of how \algname works. The target speaker is the same as the source speaker during training ((a)-(c)), and different during the actual conversion ((d)). Each speech segment contains two types of information: the speaker information (solid) and content information (striped). (a) When the bottleneck is too wide, the content embedding will contain some source speaker information. (b) When the bottleneck is too narrow, the content information is lost, which leads to imperfect reconstruction. (c) When the bottleneck is just right, perfect reconstruction is achievable, \emph{and} the content embedding contains no source speaker information. (d) During the actual conversion, the output should contain no information about the source speaker, so the conversion quality should be as high as if it were doing self-reconstruction.}
\label{fig:explain}
\vspace*{-0.1in}
\end{figure*}

We will formally show this autoencoder-based training scheme is able to achieve ideal voice conversion (Eq.~\eqref{eq:ideal}). The secret recipe is to have a proper information bottleneck. We will first state the theoretical guarantee and then present an intuitive explanation.

The following theorem characterizes the theoretical guarantee of our proposed framework.






\begin{theorem}
Consider the autoencoder framework depicted in Eqs.~\eqref{eq:convert} and \eqref{eq:training}. Given the following assumption:

1. The speaker embedding of different utterances of the same speaker is the same. Formally, if $U_1 = U_2$,  $E_s(X_1) = E_s(X_2)$.

2. The speaker embedding of different speakers is different. Formally, if $U_1 \neq U_2$, $E_s(X_1) \neq E_s(X_2)$.

3. $\{X_1(1:T)\}$ is an ergodic stationary order-$\tau$ Markov process with bounded second moment, \emph{i.e.}
\begin{equation}
\small
\begin{aligned}
p_{X_1(t)}( \cdot | X_1(1:t-1), U_1) = p_{X_1(t)}(\cdot | X_1(t-\tau:t-1), U_1).    
\end{aligned}
\end{equation}
Further assume $X_1$ has finite cardinality.

4. Denote $n$ as the dimension of $C_1$. Then $n = \lfloor n^* + T^{2/3} \rfloor$,
where $n^*$ is the optimal coding length of $p_{X_1}(\cdot | U_1)$\footnote{From the assumpion in Eq.~\eqref{eq:constant_info}, $n^*$ is assumed to be a constant regardless of $U_1$}.

Then the following holds. For each $T$, there exists a content encoder $E^*_c(\cdot; T)$ and a decoder $D^*(\cdot, \cdot; T)$, \emph{s.t.} $\lim_{T \rightarrow \infty} L = 0$, 
and
\begin{equation}
\small
\begin{aligned}
    \lim_{T \rightarrow \infty} \frac{1}{T} KL~(p_{\hat{X}_{1\rightarrow 2}}(\cdot | u_2, z_1) || p_X(\cdot | U=u_2, Z=z_1)) 
    = 0,
\end{aligned}
\end{equation}
where $KL(\cdot || \cdot)$ denotes the KL-divergence.

\label{thm:guarantee}
\end{theorem}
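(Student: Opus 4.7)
The plan is to reduce the claim to a conditional source coding problem. Assumptions~1 and~2 imply that the speaker embedding $S$ is a bijective function of the speaker identity $U$, so the decoder, having access to $S_1$ during training, effectively knows the source speaker. The tight bottleneck in Assumption~4 then provides both achievability of $L \to 0$ and, this is the decisive point, forces the content code $C_1$ to carry negligible per-symbol information about $U_1$. Distribution matching at conversion time then reduces to plugging in a different speaker embedding $S_2$ and invoking the generative model.

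Step~1 (Achievability of $L \to 0$). Conditioned on $U_1 = u_1$, Assumption~3 makes $X_1$ a stationary ergodic order-$\tau$ Markov process with finite cardinality and bounded second moment, so its conditional entropy rate exists, and by Eq.~\eqref{eq:constant_info} it equals a common value $h^*$ independent of $u_1$, with $n^* \approx T h^*$ plus only sub-linear redundancy. Because $n = n^* + T^{2/3}$ is much larger than the $O(\log T)$ redundancy of Shannon source coding over the Markov class, one obtains, for each $u_1$, a block code reconstructing $X_1$ with decoding error probability $\epsilon_T = O(T^{-1/3})$. Using the bijection between $S_1$ and $U_1$, I would splice these speaker-specific codes into a single pair $(E_c^*, D^*)$ in which the speaker-specific codebook is selected via $S_1$; the bounded second moment controls the $L_2$ tail, giving $L_\textrm{recon} \to 0$, and $L_\textrm{content}$ follows from continuity of $E_c^*$ on the vanishingly-small distortion set.

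Step~2 (Content code is speaker-free). I would next show $I(C_1; U_1)/T \to 0$. Fano's inequality applied to the reconstruction of $X_1$ from $(C_1, S_1)$ with error probability $\epsilon_T$ gives $H(X_1 | C_1, S_1) = o(T)$. Combined with $H(X_1 | S_1) = H(X_1 | U_1) = n^* + o(T)$, this yields $H(C_1 | S_1) \geq I(X_1; C_1 | S_1) = n^* + o(T)$. But the bottleneck caps $H(C_1) \leq n = n^* + T^{2/3}$ (after appropriate quantization of the continuous-valued code), so
$I(C_1; U_1) = H(C_1) - H(C_1 | U_1) = H(C_1) - H(C_1 | S_1) \leq T^{2/3} = o(T)$,
i.e., in the limit $C_1$ is independent of $U_1$ at the per-symbol scale.

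Step~3 (Distribution matching at conversion). Under Step~2, the joint law of $(C_1, S_2)$ with $S_2$ an independent target embedding agrees (per-symbol, in the limit) with the law of the pair that would arise in the generative process if one first drew the content $Z_1$ and then attached the speaker label $u_2$. Applying Step~1 to each speaker $u$ in particular shows that $D^*(\cdot, s)$ reconstructs $p_X(\cdot | U = u(s), Z = \cdot)$ with vanishing per-symbol distortion, so feeding $(C_1, S_2)$ into $D^*$ produces an output whose per-symbol law converges to $p_X(\cdot | U = u_2, Z = z_1)$. The normalized KL bound then follows from the chain rule for KL divergence together with the per-symbol convergence. The main obstacle is Step~2: the converse must be tight enough to exploit the exact $n - n^* = T^{2/3}$ gap, which is engineered to swallow the sub-linear source-coding redundancy while being too small to also encode speaker identity, and one has to carefully bridge the discrete information-theoretic accounting with the paper's continuous-valued deterministic autoencoder (via quantization or a differential-entropy argument).
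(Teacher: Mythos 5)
Your proposal follows essentially the same route as the paper's proof: an explicit speaker-conditional Shannon code that fits the bottleneck gives $L\to 0$, the cap $H(C_1)\le n=n^*+T^{2/3}$ combined with near-lossless recovery yields $I(C_1;U_1)=o(T)$, and conversion is bridged to the target speaker's self-reconstruction to get the normalized-KL bound. The only differences are cosmetic — you invoke Fano's inequality where the paper does an exact entropy accounting with an indicator of the typical set, and you are more explicit about the discrete/continuous quantization issue, which the paper only addresses in a closing remark about the content loss bounding the code's capacity.
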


The conclusion of Thm.~\ref{thm:guarantee} can be interpreted as follows. If the number of frames $T$ is large enough, and if the bottleneck dimension $n$ is properly set, then the global optimizer of the loss function in Eq.~\eqref{eq:loss} would approximately satisfy the ideal conversion property in Eq.~\eqref{eq:ideal}. This conclusion is quite strong, because a major justification of applying GAN to style transfer, despite all its hassles, is that it can ideally match the distribution of the true samples from the target domain. Now Thm.~\ref{thm:guarantee} conveys the following message: to achieve the desired distribution matching, an autoencoder is all you need.  The formal proof of Thm.~\ref{thm:guarantee} will be presented in the appendix. Here, we will present an intuitive explanation, which is also the gist of our proof. The basic idea is that the bottleneck dimension of the content encoder needs to be set such that it is just enough to code the speaker independent information.

As shown in Fig.~\ref{fig:explain}, speech contains two types of information: the speaker information (shown as solid color) and the speaker-independent information (shown as striped), which we will refer to as the content information\footnote{The speaker-independent information includes but is not limited to the content information in $Z$, but for convenience, we will refer to the speaker-independent information as content information.}. Suppose the bottleneck is very wide, as wide as the input speech $X_1$. The most convenient way to do self-reconstruction is to copy $X_1$ as is to the content embedding $C_1$, and this will guarantee a perfect reconstruction. However as the dimension of $C_1$ decreases, $C_1$ is forced to lose some information. Since the autoencoder attempts to achieve perfect reconstruction, it will choose to lose speaker information because the speaker information is already supplied in $S_1$. In this case, perfect reconstruction is still possible, but the $C_1$ may contain some speaker information, as shown in Fig.~\ref{fig:explain}(a).

On the other hand, if the bottleneck is very narrow, then the content encoder will be forced to lose so much information that not only the speaker information but also the content information is lost. In this case, the perfect reconstruction is impossible, as shown in Fig.~\ref{fig:explain}(b).

Therefore, as shown in Fig.~\ref{fig:explain}(c), when the dimension of $C_1$ is chosen such that the dimension reduction is just enough to get rid of all the speaker information but no content information is harmed, we have reached our desirable condition, under which two important properties hold:

1. Perfect reconstruction is achieved.

2. The content embedding $C_1$ does not contain any information about the source speaker $U_1$, which we refer to as \emph{speaker disentanglement}.

We will now show by contradiction how these two properties imply an ideal conversion. Suppose when \algname is performing an actual conversion (source and target speakers are different), the quality is low, or does not sound like the target speaker at all. By property 1, we know that the reconstruction (source and target speakers are the same) quality is high. However, according to Eq.~\eqref{eq:convert}, the output speech $\hat{X}_{1\rightarrow2}$ can only access $C_1$ and $S_2$, both of which do not contain any information of the source speaker $U_1$. In other words, from the conversion output, one can never tell if it is produced by self-reconstruction or conversion, as shown in Fig.~\ref{fig:explain}(d). If the conversion quality is low, but the reconstruction quality is high, one will be able to distinguish between conversion and reconstruction above chance, which leads to a contradiction.

\section{\algname Architecture}
\label{sec:archi}
\begin{figure*}[t!]
\centering
\includegraphics[width=0.9\textwidth]{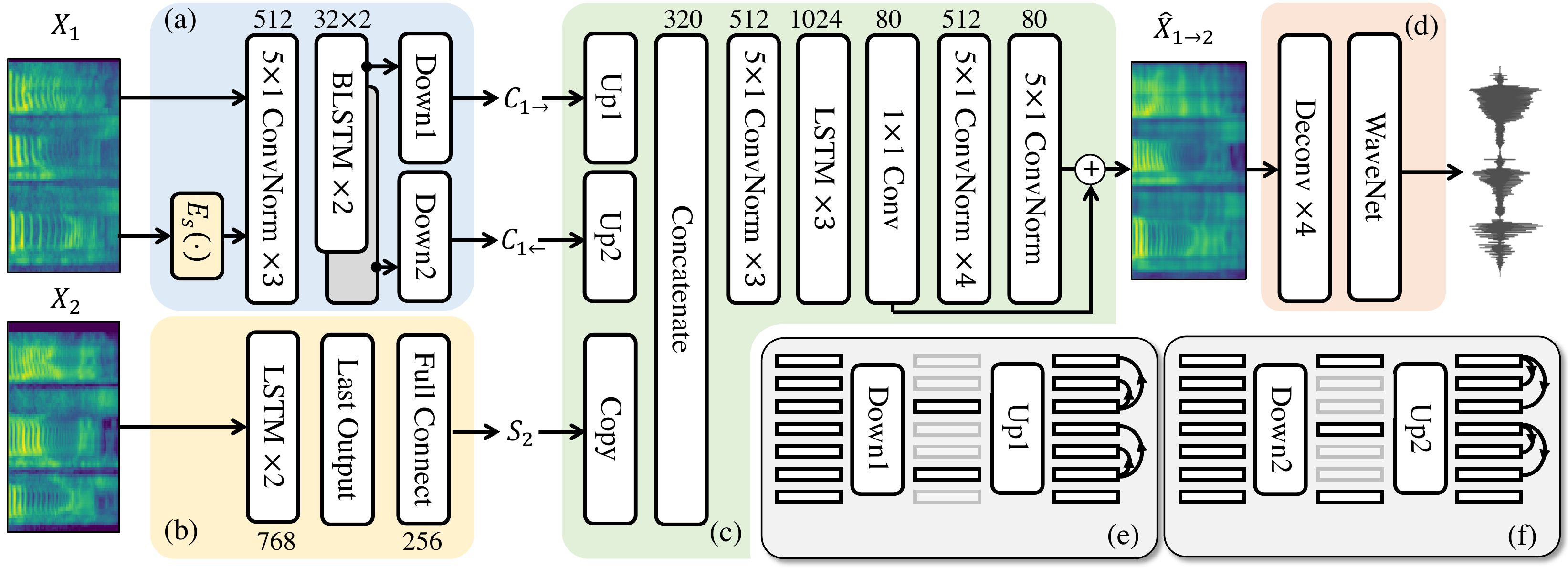}
\vspace*{-0.05in}
\caption{\algname architecture. The number above each block represents the cell/output dimension of the structure. ConvNorm denotes convolution followed by batch normalization. BLSTM denotes bi-directional LSTM, whose white block denotes forward direction, and grey block denotes backward direction. (a) The content encoder. The $E_s(\cdot)$ module is of the same architecture as in (b). (b) The style encoder. (c) The decoder. (d) The spectrogram inverter. (e) and (f) demonstrate the downsampling and upsampling of the forward and backward outputs of the Bi-directional LSTM, using a up/downsampling factor of $3$ as an example. The real up/downsampling factor is $32$. The lightened feature denotes that they are removed; the arrows denote copying the feature at the arrow origin to the destination.}
\label{fig:archi}
\end{figure*}

As shown in Fig.~\ref{fig:archi}, \algname consists of three major modules: a speaker encoder, a content encoder, a decoder. \algname works on the speech mel-spectrogram of size $N$-by-$T$, where $N$ is the number of mel-frequency bins and $T$ is the number of time steps (frames). A spectrogram inverter is introduced to convert the output mel-spectrogram back to the waveform, which will also be detailed in this section.

\subsection{The Speaker Encoder}
According to assumptions 1 and 2 in Thm.~\ref{thm:guarantee}, the goal of the speaker encoder is to produce the same embedding for different utterances of the same speaker, and different embeddings for different speakers. For conventional many-to-many voice conversion, the one-hot encoding of speaker identities suffices. However, in order to perform zero-shot conversion, we need to apply an embedding that is generalizable to unseen speakers. Therefore, inspired by \cite{jia2018transfer}, we follow the design in \cite{wan2018generalized}. As shown in Fig.~\eqref{fig:archi}(b), the speaker encoder consists of a stack of two LSTM layers with cell size 768. Only the output of the last time is selected and projected down to dimension 256 with a fully connected layer. The resulting speaker embedding is a $256$-by-$1$ vector. The speaker encoder is pre-trained on the GE2E loss \cite{wan2018generalized} (the softmax loss version), which maximizes the embedding similarity among different utterances of the same speaker, and minimizes the similarity among different speakers. Therefore, it is very consistent with assumptions 1 and 2 in Thm.~\ref{thm:guarantee}. 

In our implementation, the speaker encoder is pre-trained on the combination of VoxCeleb1 \cite{voxceleb} and Librispeech \cite{librispeech} corpora, where there are a total of 3549 speakers.

\subsection{The Content Encoder}

As shown in Fig.~\ref{fig:archi}(a), the input to the content encoder is the 80-dimensional mel-spectrogram of $X_1$ concatenated with the speaker embedding, $E_s(X_1)$, at each time step. The concatenated features are fed into three $5\times 1$ convolutional layers, each followed by batch normalization and ReLU activation. The number of channels is $512$. The output then passes to a stack of two bidirectional LSTM layers. Both the forward and backward cell dimensions are $32$, so their combined dimension is $64$.

As a key step of constructing the information bottleneck, both the forward and backward outputs of the bidirectional LSTM are downsampled by $32$. The downsampling is performed differently for the forward and backward paths. For the forward output, the time steps $\{0, 32, 64, \cdots \}$ are kept; for the backward output, the time steps $\{31, 63, 95, \cdots \}$ are kept. Figs.~\ref{fig:archi}(e) and (f) also demonstrate how the downsampling is performed (for the ease of demonstration, the downsampling factor is set to $3$). The resulting content embedding is a set of two $32$-by-$T/32$ matrices, which we will denote $C_{1\rightarrow}$ and $C_{1\leftarrow}$ respectively. The downsampling can be regarded as dimension reduction along the temporal axis, which, together with the dimension reduction along the channel axis, constructs the information bottleneck.

\subsection{The Decoder}
The architecture of the decoder is inspired by \cite{shen2018natural}; and is shown in Fig.~\ref{fig:archi}(c). First, the content and speaker embeddings are both upsampled by copying to restore to the original temporal resolution. Formally, denotes the upsampled features as $U_\rightarrow$ and $U_\leftarrow$ respectively. Then
\begin{equation}
\small
\begin{aligned}
    U_\rightarrow(:, t) &= C_{1\rightarrow}(:, \lfloor t/32 \rfloor) \\
    U_\leftarrow(:, t) &= C_{1\leftarrow}(:, \lfloor t/32 \rfloor),
\end{aligned}
\end{equation}
where $(:, t)$ denotes indexing the $t$-th column. Figs.~\ref{fig:archi}(e) and (f) also demonstrate the copying. The underlying intuition is that each embedding at each time step should contain both past and future information. For the speaker embedding, simply copy the vector $T$ times. 

Then, the upsampled embeddings are concatenated and fed into three $5\times 1$ convolutional layers with $512$ channels, each followed by batch normalization and ReLU, and then three LSTM layers with cell dimension $1024$. The outputs of the LSTM layer are projected to dimension $80$ with a $1\times 1$ convolutional layer. This projection output is the initial estimate of the converted speech, denoted as $\tilde{X}_{1\rightarrow2}$.

In order to construct the fine details of the spectrogram better on top of the initial estimate, we introduce a post-network after the initial estimate, as introduced in \citet{shen2018natural}. The post network consists of five $5\times 1$ convolutional layers, where batch normalization and hyperbolic tangent are applied to the first four layers. The channel dimension for the first four layers is $512$, and goes down to $80$ in the final layer. We will refer to the output of the post-network as the residual signal, denoted as $R_{1\rightarrow2}$. The final conversion result is produced by adding the residual to the initial estimate, \emph{i.e.}
\begin{equation}
\small
    \hat{X}_{1\rightarrow2} = \tilde{X}_{1\rightarrow 2} + R_{1\rightarrow 2}.
\end{equation}

During training, reconstruction loss is applied to both the initial and final reconstruction results. Formally, in addition to the loss specified in Eq.~\eqref{eq:loss}, we add an initial reconstruction loss defined as
\begin{equation}
\small
    L_\textrm{recon0} = \mathbb{E} [\lVert \tilde{X}_{1\rightarrow1} - X_1 \rVert_2^2],
    \label{eq:loss_init_recon}
\end{equation}
where $\tilde{X}_{1\rightarrow1}$ is the reciprocal of $\tilde{X}_{1\rightarrow2}$ in the reconstruction case, \emph{i.e.} when $U_2 = U_1$. The total loss becomes
\begin{equation}
\small
    \min_{E_c(\cdot), D(\cdot, \cdot)} L = L_\textrm{recon} + \mu L_\textrm{recon0} + \lambda L_\textrm{content}.
    \label{eq:loss_new}
\end{equation}
Although Eq.~\eqref{eq:loss_new} deviates from Eq.~\eqref{eq:loss}, on which Thm.~\ref{thm:guarantee} rests, we found empirically that this improves convergence and does not harm the performance.

\subsection{The Spectrogram Inverter}
We apply the WaveNet vocoder as introduced in \citet{van2016wavenet}, which consists of four deconvolution layers. In our implementation, the frame rate of the mel-spetrogram is 62.5 Hz and the sampling rate of speech waveform is 16 kHz. So the deconvolution layers will upsample the spectrogram to match the sampling rate of the speech waveform. Then, a standard 40-layer WaveNet conditioning upon the upsampled spectrogram is applied to generate the speech waveform. We pre-trained the WaveNet vocoder using the method described in \citet{shen2018natural} on the VCTK corpus.

\section{Experiments}
\label{sec:exper}


%

In this section, we will evaluate \algname on many-to-many voice conversion tasks, and empirically validate the assumptions of the \algname framework. We strongly encourage readers to listen to the demos\footnote{https://auspicious3000.github.io/autovc-demo/}.


\subsection{Configurations}

The evaluation is performed on the VCTK corpus \cite{veaux2016superseded}, which contains 44 hours of utterances from 109 speakers. Each speaker reads a different set of sentences, except for the rainbow passage\footnote {http://web.ku.edu/~idea/readings/rainbow.htm} and the elicitation paragraph. So the conversion setting is non-parallel. Depending on the conversion tasks, different subsets of speakers were selected. The data of each speaker is then partitioned into training and test sets by $9$:$1$. \algname is trained with a batch size of two for 100k steps, using the ADAM optimizer. The speaker embedding is generated by feeding $10$ two-second utterances of the same speaker to the speaker encoder and averaging the resulting embeddings. The weights in Eq.~\eqref{eq:loss_new} are set to $\lambda = 1$, $\mu = 1$. 

We performed two subjective tests on Amazon Mechanical Turk (MTurk)\footnote{https://www.mturk.com/}. In the first test, called the mean opinion score (MOS) test, the subjects are presented with converted utterances. For each utterance, the subjects are asked to assign a score of $1$-$5$ on the naturalness on the converted speech. In the second test, called the similarity test, the subjects are presented with pairs of utterances. In each pair, there is one converted utterance, and one utterance from the target speaker uttering the same sentence. For each pair, the subjects are asked to assign a score of $1$-$5$ on the voice similarity. We follow the design in \citet{wester2016analysis} to cue the subjects to judge if the speakers are the same, and how confident they are with their judgment. Thus the similarity score of $5$ corresponds to the same speaker with high confidence, and $1$ corresponds to different speakers with high confidence. The subjects are explicitly asked to focus on the voice rather than intonation and accent.

\subsection{Traditional Many-to-Many Conversion}
\label{subsec:tradition}

\begin{figure}[!t]
\centering
\includegraphics[width=0.8\columnwidth]{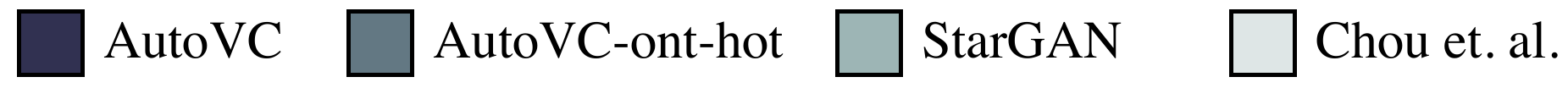}
\subfigure[MOS]{\includegraphics[width=0.5\linewidth]{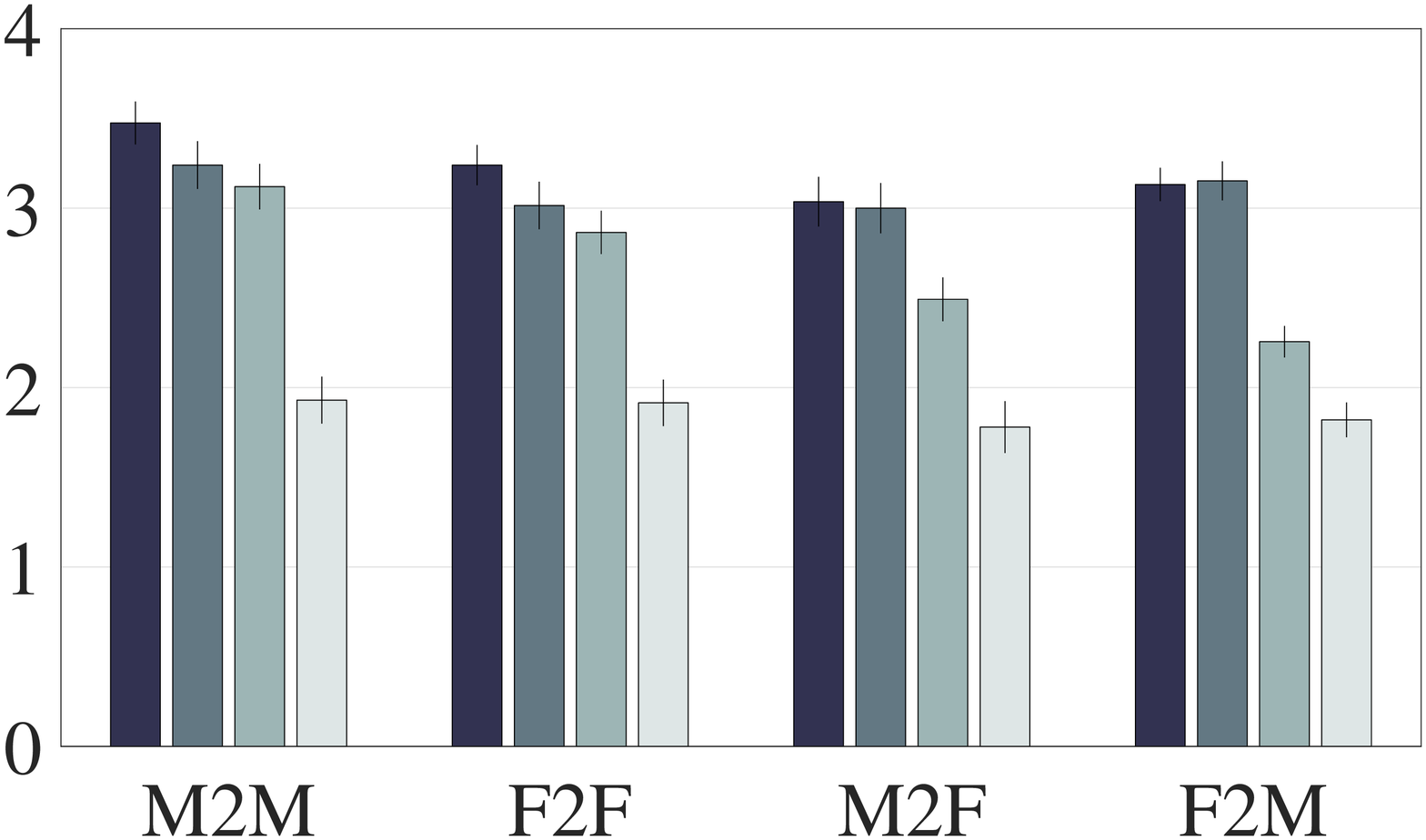}}
\subfigure[Similarity]{\includegraphics[width=0.481\linewidth]{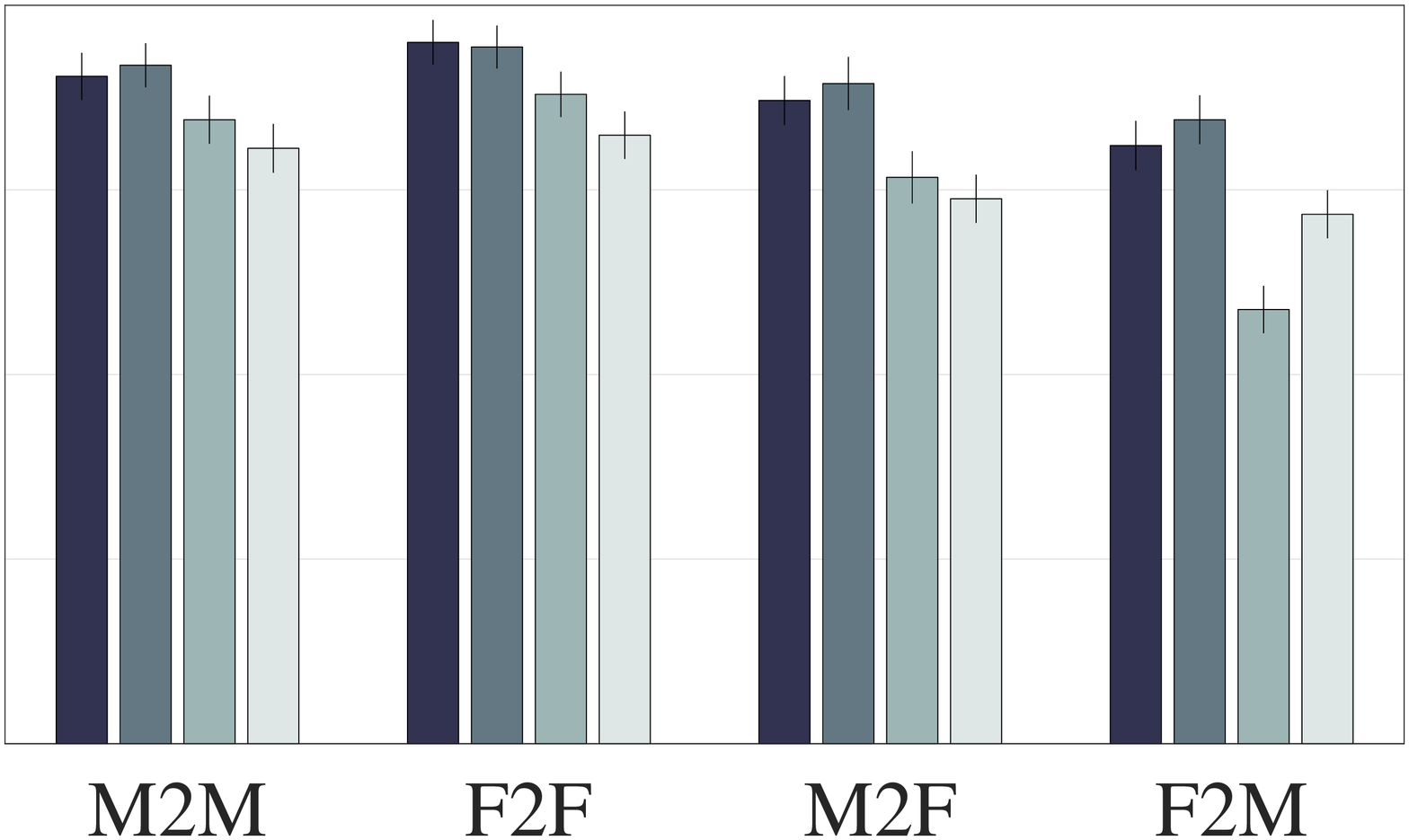}}
\caption{Subjective evaluation results for traditional conversion. The error bars denote 95\% confidence interval.}
\label{fig:exper1}
\end{figure}

Traditional many-to-many conversion task performs conversion only on speakers seen in the training corpus. Two baselines are compared with \algnamens, which we name StarGAN-VC \cite{kameoka2018stargan} and \citet{chou2018multi}. Both baselines are state-of-the-art in non-parallel many-to-many voice conversion. For \citet{chou2018multi}, we use the original implementation\footnote{https://github.com/jjery2243542/voice\_conversion} and its pre-trained model, which is trained on 20 speakers in the VCTK corpus. For fair comparison, the other models are trained on the same 20 speakers. Note that the training/test sets are partitioned differently from the Chou et al. pre-trained model, so we are giving the Chou et al. baseline an unfair advantage of seeing part of the test utterance during training. We use the open-source implementation for StarGAN-VC\footnote{https://github.com/liusongxiang/StarGAN-Voice-Conversion}.

\algname uses the speaker embeddings produced by the speaker encoder, while the baselines only use the one-hot embeddings of the speakers. To avoid unfair comparison and study if the performance advantage of \algname simply comes from the speaker embeddings, we implement another version of \algnamens, called \algnameonehotns, that also uses one-hot embeddings of the speakers.

To construct the utterances for the MTurk evaluation, $10$ speakers, $5$ male and $5$ female, are randomly chosen from the $20$ speakers in the training set. We then produce $10 \times 9 = 90$ conversions by converting a test utterance of each of the $10$ speakers to each of the $10$ speakers' voice. Each test unit, called HIT, contains conversion results of the same source-target speaker pair of the three algorithms, so there are $100$ HITs in total. Each HIT is assigned to $10$ subjects.

Fig. \ref{fig:exper1}(a) presents the MOS scores, and Fig.~\ref{fig:exper1}(b) presents the similarity scores. We are dividing the audio into four gender groups, male to male, male to female, female to male and female to female, and summarize the scores within each gender group. As shown in Fig. {fig:exper1}(a), the perceptual quality of the speech generated by \algname is much better than the baselines'. The MOS scores of \algname are above $3$ for all groups, whereas those for the baselines almost all fall below $3$. To give readers a better idea of what this means, notice that the MOS for 16kHz natural speech is around $4.5$. The MOS scores of the current state-of-the-art speech synthesizers are between $4$ and $4.5$ \cite{shen2018natural, arik2017deep}. The highest score in the 2016 Voice Conversion Challenge \cite{wester2016analysis} for \emph{parallel} conversion is $3.8$ for same-gender conversions, and $3.2$ for cross-gender conversion. Therefore, our subjective evaluation results show that \algname approaches the performance of parallel conversion systems in terms of naturalness, and is much better than existing non-parallel conversion systems.

In terms of similarity, \algname also out-performs the baselines. Note that for the baseline algorithms, there is a significant degradation from same-gender conversion to cross-gender conversion, but \algname algorithms do not display such a degradation. Finally, there is no significant difference between \algname and \algnameonehotns, which implies that the performance gain of \algname does not result from the use of the speaker encoder.

\subsection{Zero-Shot Conversion}
\label{subsec:zero-shot}

\begin{figure}[!t]
\centering
\includegraphics[width=1\columnwidth]{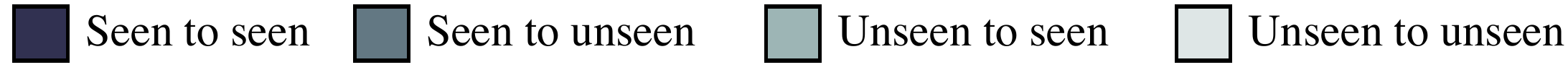}
\subfigure[MOS]{\includegraphics[width=0.5\linewidth]{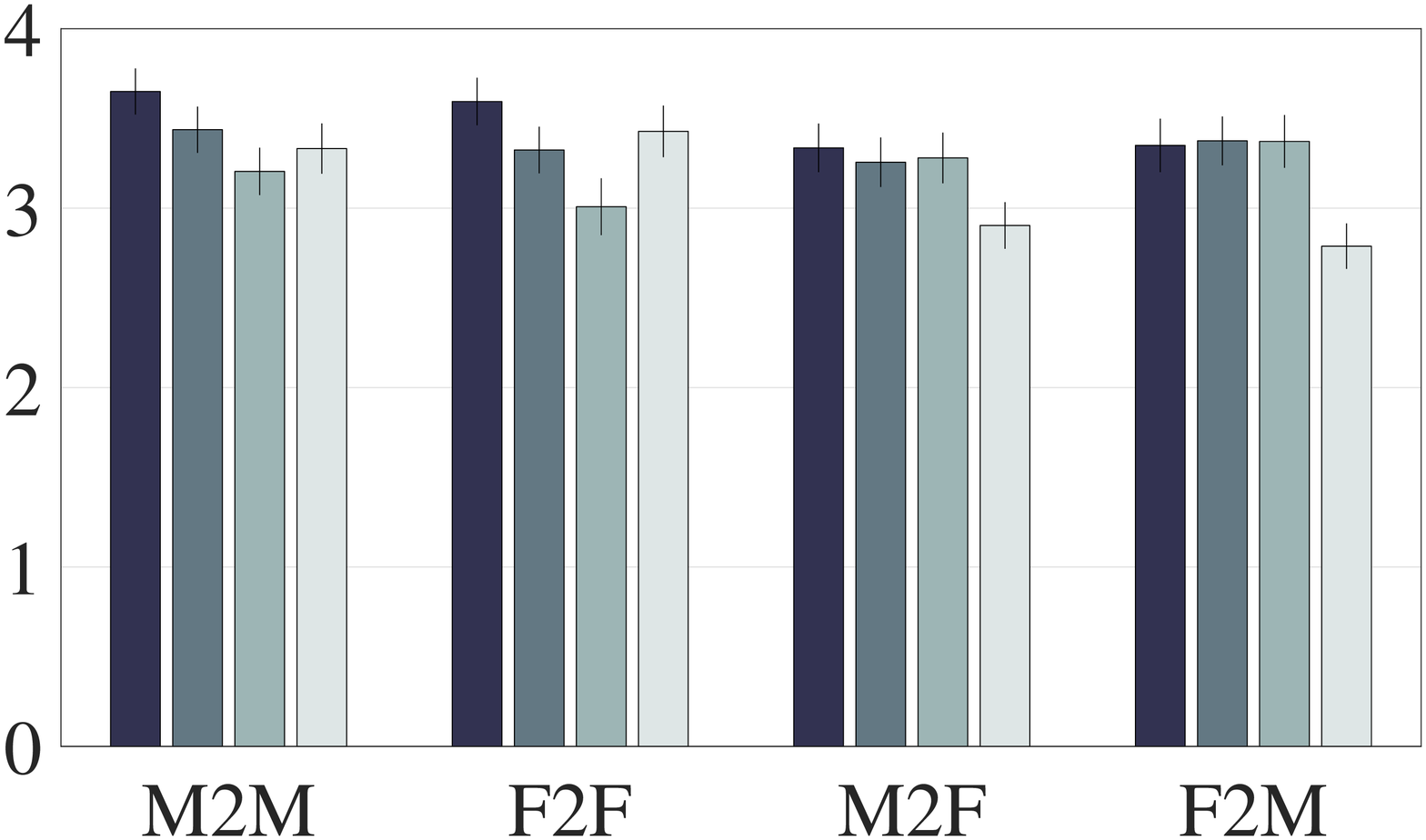}}
\subfigure[Similarity]{\includegraphics[width=0.481\linewidth]{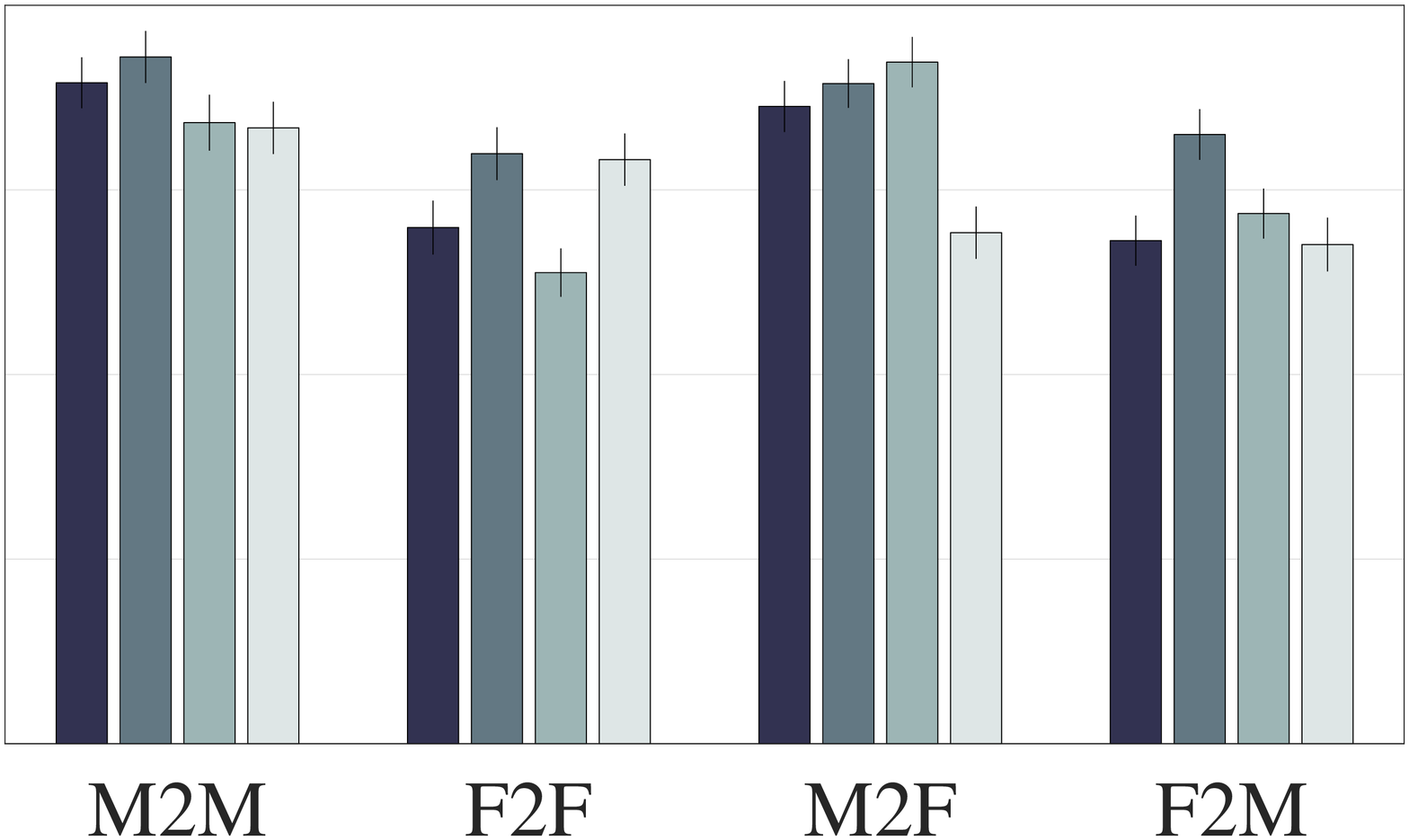}}
\caption{Subjective evaluation results for zero-shot conversion. The error bars denote 95\% confidence interval.}
\label{fig:exper2}
\end{figure}

Now we are ready to go beyond the traditional conversion task towards zero-shot conversion, where the target speakers are absent in the training set and only a few ($20$ seconds) utterances of each target speaker are available for reference. Since there are no zero-shot conversion baselines, we will compare the results within \algnamens.

The experiment settings are almost the same as in section~\ref{subsec:tradition}, except that the training set is expanded to $40$ speakers to improve the generalizability to unseen speakers. $10$ seen speakers and $10$ unseen speakers are selected for MTurk evaluation, so there are a total of $400$ source-target speaker pairs, each producing one conversion utterance. Each HIT contains four utterances, summing up to $100$ HITs in total. Each HIT is assigned to $10$ subjects.

Fig.~\ref{fig:exper2} presents the scores. There are three observations. First, for conversions among seen speakers, the performance is comparable to that in section~\ref{subsec:tradition}. Note that in this experiment, \algname is trained on $40$ speakers, which doubles the number of speakers used in the experiment in section~\ref{subsec:tradition}. Therefore, this comparable performance on seen speakers indicates that \algname is scalable to a large number of speakers in the training set.

Second, in terms of MOS score, \algname shows good generalizations to unseen speakers, with the MOS score exceeding $3$ in most settings. This means, even for unseen speakers, \algname is still able to outperform most existing non-parallel conversion algorithms.

Finally, in terms of the similarity score, there is an interesting observation that as long as seen speakers are included in either side of the conversions, the performance is comparable. There is a significant gap between conversions from unseen speakers to unseen speakers and the rest of the paradigms. Nevertheless, even for conversion between unseen speakers, which is the most challenging case, the similarity scores are still very competitive, which demonstrates \algnamens's competence in zero-shot conversion.

\subsection{Bottleneck Dimension Analysis}

\begin{table}[!t]
\caption{Assessment of the reconstruction quality and speaker disentanglement of \algnamens.}
\label{tab:3models}
\centering
	\begin{tabular}{ l| c c c }
    	\hline\hline
    	 & Narrow & \algnamens & Wide \\
    	\hline
    	Recon. Error & 34.6 & 8.59 & 3.85 \\
    	Class. Acc. & 7.50\% & 12.0\% & 70.5\% \\
    	\hline\hline
	\end{tabular}
\end{table}

Our theoretical justifications for the proposed style transfer autoencoder lies in the claim that the bottleneck dimension affects perfect reconstruction and disentanglement of content code and source speaker information, and that there exists a desirable bottleneck dimension where both properties hold (Fig.~\ref{fig:explain}). In this section, we will empirically validate this claim.

We measure \algnamens's reconstruction quality and the degree of disentanglement between the content code and the source speaker information. The reconstruction quality is measured by the $\ell_2$-norm of reconstruction error in the training set. Lower reconstruction error means higher reconstruction quality. The disentanglement is measured by training a speaker classifier on the content code and computing the classification accuracy on the training set. Higher classification accuracy means poorer disentanglement. The speaker classifier consists of $3$ fully-connected layers with 2,048, 1,024 and 1,024 hidden nodes respectively in each layer and softplus activation. The output activation is softmax and the training loss is cross entropy. The model architecture and experiment setting follow those in section~\ref{subsec:zero-shot}, so the speaker classification is on the $40$ seen speakers.

As references, we introduce two anchor models. The first model, which we name the ``too narrow'' model, reduces the dimensions of $C_{1\rightarrow}$ and $C_{1\leftarrow}$ from $32$ to $16$, and increases the downsampling factor from $32$ to $128$ (note that higher downsampling factor means lower temporal dimension). The second model, which we name the ``too wide'' model, increases the dimensions of $C_{1\rightarrow}$ and $C_{1\leftarrow}$ to $256$, and decreases the sampling factor to $8$, and $\lambda$ is set to $0$. Supposedly, according to Fig.~\ref{fig:explain}, the ``too narrow'' model should have low classification accuracy (good disentanglement) but high reconstruction error (poor reconstruction). The ``too wide'' model should have low reconstruction error (good reconstruction) but high classification accuracy (poor disentanglement). The normal \algname model should have both low reconstruction error (good reconstruction) and low classification accuracy (good disentanglement).

Table.~\ref{tab:3models} shows the reconstruction error and speaker classification accuracy for the three models. As expected, as the bottleneck dimension decreases, the reconstruction error increases and the classification accuracy decreases. What is interesting is that the normal \algname model does strike a good balance, with reconstruction error almost as low as the ``too wide'' model and the classification accuracy almost as low as the ``too narrow'' model. It is worth mentioning that \citet{chou2018multi} explicitly perform adversarial training to enforce speaker disentanglement. A similar classification experiment to test disentanglement is performed, and the classification accuracy is $45.1\%$ on $20$ speakers after the adversarial training is applied. In order to fairly compare with this result, we also perform a speaker classification test on the same $20$ speakers, and the classification accuracy is $14.2\%$. This result shows that bottleneck dimension tuning on speaker disentanglement is more effective than the more sophisticated adversarial training.
\section{Conclusion}
\label{sec:conclu}

In this paper, we have proposed \algnamens, a non-parallel voice conversion algorithm that significantly outperforms the existing state-of-the-art, and that is the first to perform  zero-shot conversions. In sharp contrast to its performance advantage is its simple autoencoder structure that trains only on self-reconstruction, and a bottleneck tuning to balance between reconstruction quality and speaker disentanglement. In an era of building increasingly sophisticated algorithms for style transfer, our theoretical justification and the success of \algname suggest that it is time to return to simplicity, because sometimes an autoencoder with a careful bottleneck design is all you need to make a difference.

\bibliography{example_paper}
\bibliographystyle{icml2019}

\clearpage

\section*{Appendix: Proving Thm~\ref{thm:guarantee}}

In this appendix, we will prove Thm~\ref{thm:guarantee}. We will start with a few lemmas:
\begin{lemma}
Given assumptions 1 and 2 in Thm.~\ref{thm:guarantee}, $S_1$ is a deterministic one-to-one mapping of $U_1$; so is $S_2$ to $U_2$.
\label{lem:one-to-one}
\end{lemma}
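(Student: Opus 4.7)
The plan is to unpack the two assumptions on $E_s(\cdot)$ and show directly that they force $S_1$ to be a well-defined injective function of $U_1$. Since everything is about the map $u \mapsto E_s(x)$ where $x$ is any utterance of speaker $u$, the proof should be short.

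First I would define a candidate mapping $f$ from the speaker identity space to the embedding space as follows: for any speaker identity $u$, pick any utterance $x$ produced by $u$ (i.e., any $x$ in the support of $p_X(\cdot|U=u)$) and set $f(u) := E_s(x)$. The first thing to check is that $f$ is well-defined, which is exactly Assumption 1: if $x$ and $x'$ are both utterances of the same speaker $u$, then $E_s(x) = E_s(x')$, so the definition does not depend on the choice of representative $x$. Consequently, $S_1 = E_s(X_1')$, being the embedding of an utterance of speaker $U_1$, must equal $f(U_1)$ almost surely, which is the deterministic-function claim.

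Next I would establish injectivity of $f$ directly from Assumption 2, arguing by contradiction: suppose $f(u) = f(u')$ with $u \neq u'$. Unpacking the definition, this means there exist utterances $x$ of $u$ and $x'$ of $u'$ with $E_s(x) = E_s(x')$, which contradicts Assumption 2 (different speakers produce different embeddings). Hence $f$ is one-to-one. The identical argument applied to $(U_2, X_2)$ gives the corresponding statement for $S_2 = f(U_2)$.

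I do not anticipate any real obstacle here; the lemma is essentially a restatement of the two assumptions in the language of random variables, and the only subtlety is being explicit that Assumption 1 is what makes $f$ well-defined and Assumption 2 is what makes it injective. I would be careful to note that the whole argument holds pointwise on the sample space, so no measure-theoretic machinery beyond ``almost surely equal'' is needed.
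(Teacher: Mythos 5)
Your argument is correct and is exactly the reasoning the paper has in mind; the paper simply states that the proof is obvious and omits it. Your write-up fills in that omission in the natural way (Assumption 1 gives well-definedness of the map $u \mapsto E_s(x)$, Assumption 2 gives injectivity), so there is nothing to add.
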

The proof is obvious and omitted.

\begin{lemma}
Given all the assumptions in Thm.~\ref{thm:guarantee}. Then exist a asymptotic global minimizer of Eq.~\eqref{eq:loss} that statisfies:
1. 
\begin{equation}
 \lim_{T \rightarrow \infty} \frac{1}{T} I(C_1; U_1) = 0
 \label{eq:lemma2}
\end{equation}
where $I(\cdot ; \cdot)$ denotes mutual information.

2.
\begin{equation}
    \plim_{T\rightarrow \infty} \hat{X}_{1\rightarrow1} = X_1
    \label{eq:lemma1}
\end{equation}

3.
\begin{equation}
    \lim_{T \rightarrow \infty} \mathbb{E} [(\hat{X}_{1\rightarrow1} - X_1)^2] + \lambda \mathbb{E} [(E_c(\hat{X}_{1\rightarrow1}) - C_1)^2]= 0
\end{equation}

\label{lem:independence}
\end{lemma}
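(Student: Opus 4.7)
The strategy is to exhibit a specific content encoder $E_c^*$ and decoder $D^*$ whose loss vanishes as $T \to \infty$, and then verify the three items. By Lemma~\ref{lem:one-to-one}, the speaker embedding $S_1$ is a one-to-one function of $U_1$, so the decoder may be regarded as having $U_1$ as side information. This reduces the problem to \emph{source coding with side information at the decoder}: encode $X_1(1:T)$ without access to $U_1$ using $n = \lfloor n^* + T^{2/3} \rfloor$ symbols and recover $X_1$ at a decoder that has $U_1$ in hand. This is the classical Slepian--Wolf setup, specialized to an order-$\tau$ stationary ergodic Markov source.

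First I would construct $E_c^*$ by random binning over the (conditional) strong typical set for the pair $(X_1, U_1)$, using the Markov Shannon--McMillan--Breiman theorem so that typicality arguments go through. Because the code rate $n/T = n^*/T + T^{-1/3}$ exceeds the conditional per-sample entropy $n^*/T$ by $T^{-1/3}$, a method-of-types argument bounds the random-binning error probability by $2^{-\Theta(T^{2/3})}$, which vanishes. The decoder $D^*(C_1, S_1)$ is then the canonical typical-set decoder that recovers $X_1$ from the bin index $C_1$ using $U_1$ read off from $S_1$. Since the raw code is a binary string of length $n$, I would fix once and for all a bounded embedding of $\{0,1\}^n$ into $\mathbb{R}^n$, so that $C_1$ is an $n$-dimensional bounded random vector; this aligns the information-theoretic coding viewpoint with the continuous autoencoder bottleneck.

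With this construction in place, the three items follow. For item~2, reliability of the code gives $\Pr[\hat{X}_{1\rightarrow1}\neq X_1] \to 0$, hence $\plim_{T\to\infty}\hat{X}_{1\rightarrow1} = X_1$. For item~3, the bounded second moment of $X_1$ (assumption~3) together with the bounded support of $C_1$ and the vanishing miscoding probability upgrade convergence in probability to $L^2$ convergence via uniform integrability, giving $L_\textrm{recon}\to 0$ and, since $E_c^*(X_1)=C_1$ exactly whenever reconstruction is correct, $L_\textrm{content}\to 0$ as well; thus the total loss vanishes and $(E_c^*, D^*)$ is an asymptotic global minimizer. For item~1, since $C_1$ is a deterministic function of $X_1$, the chain rule gives $H(X_1 | U_1) = H(C_1|U_1) + H(X_1 | C_1, U_1)$. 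Fano's inequality applied to the reliable recovery above yields $H(X_1|C_1, U_1) = o(T)$, and combining $H(X_1|U_1) = n^*$ with the crude bound $H(C_1) \le n = n^* + T^{2/3}$ gives $I(C_1; U_1) = H(C_1) - H(C_1|U_1) \le T^{2/3} + o(T)$, so $\frac{1}{T} I(C_1; U_1) \to 0$.

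The hardest step will be the reliable-recovery step with such a thin excess rate: the per-symbol excess $T^{-1/3}$ vanishes, yet the error probability must still go to $0$. For i.i.d. sources this is a standard method-of-types calculation giving $2^{-\Theta(T^{2/3})}$ decay; for our order-$\tau$ Markov source, the same conclusion follows after conditioning on initial $\tau$-blocks and applying a Markov AEP, which is exactly where the finite-alphabet clause of assumption~3 is essential. A smaller technical wrinkle is the uniform integrability used in item~3 to pass from convergence in probability to $L^2$ convergence; this is where the bounded-second-moment hypothesis is invoked.
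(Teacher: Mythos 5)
Your proposal is correct and follows the same overall strategy as the paper: exhibit an explicit near-lossless code for $X_1$ given the speaker as side information, at rate $n = \lfloor n^* + T^{2/3}\rfloor$ so the per-symbol excess $T^{-1/3}$ vanishes; show reconstruction succeeds with probability tending to one by an AEP argument for the ergodic order-$\tau$ Markov source; and bound $I(C_1;U_1)$ by squeezing $H(C_1)\le n$ against $H(C_1\mid U_1)\approx H(X_1\mid U_1)= n^* + O(1)$, so that only the $O(T^{2/3})$ slack survives. Where you differ is in the technical machinery. The paper's encoder is the optimal lossless (conditional Shannon) code for $p_{X_1}(\cdot\mid u_1)$ restricted to the high-probability set $\mathcal{X}=\{x_1:\log p_{X_1}(x_1\mid U_1)\le n-1\}$, which implicitly requires the encoder to know $u_1$; your Slepian--Wolf random-binning construction needs side information only at the decoder, which is arguably a cleaner fit to the formal statement $C_1=E_c(X_1)$ (the paper's architecture does feed $E_s(X_1)$ into the content encoder, so both are admissible). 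For item~1, the paper lower-bounds $H(C_1\mid U_1)$ by an explicit entropy decomposition over the typicality indicator $A_1=\mathbbm{1}[X_1\in\mathcal{X}]$, controlling the atypical contribution by $p_{A_1}(0)\,T\log K + H(A_1)$; you reach the same lower bound in one step via Fano's inequality applied to the reliable decoder. Both routes are standard and equivalent in strength. Note finally that both arguments share the same soft spot in item~3: finite second moment alone does not upgrade convergence in probability to $L^2$ convergence; as you correctly observe, it is really the finite-cardinality clause of assumption~3 (hence boundedness and uniform integrability) that closes this step, a point the paper glosses over.
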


\begin{proof}
Define the following set
\begin{equation}
    \mathcal{X} = \{x_1 : \log p_{X_1}(x_1 | U_1) \leq n-1 = n^* -1 + T^{2/3}\}
\end{equation}
$\mathcal{X}$ characterizes the set of instances where the optimal code length is guaranteed to be smaller than $n$.

Denote $C_1 = E^*_c(X_1; T)$ as the the following coding scheme. When $X_1 \in \mathcal{X}$, $C_1$ is the optimal lossless code for $p_{X_1}(\cdot | u_1)$ (whose code length is smaller than $n$ by Shannon's Coding Theorem) padded with $0$ to length $n$. When $X_1 \notin \mathcal{X}$, $C_1$ is any random number of dimension $n$.

Denote an auxiliary random variable
\begin{equation}
    A_1 = \mathbbm{1}[X_1 \in \mathcal{X}]
\end{equation}
where $\mathbbm{1}[\cdot]$ denotes the indicator function.

When $A_1=1$, there is a one-to-one mapping from $C_1$ to $X_1$, so we have
\begin{equation}
    H(C_1 | U_1, A_1=1) = H(X_1 | U_1, A_1=1)
    \label{eq:lossless_code}
\end{equation}
On the other hand, define $h_m$ as the capacity of each dimension of $C_1$, \emph{i.e.}
\begin{equation}
    h_\textrm{m} = \max_{p_{C_{1i}}(\cdot)} H(C_{1i})
\end{equation}
Then, the information $C_1$ contains is limited by the number of dimensions it has, \emph{i.e.}
\begin{equation}
\begin{aligned}
    H(C_1) &\leq \sum_i H(C_{1i}) \leq n h_\textrm{m} \\
    &\leq n^* h_m + T^{2/3} h_m \\
    &\leq H(X_1 | U_1) + 1 + T^{2/3} h_m
\end{aligned}
\label{eq:bottleneck_bound}
\end{equation}
where the second line is from assumption 3 of Thm.~\ref{thm:guarantee}. The third line is from the Shannon's coding theorem.

Notice that $A_1$ is a function of $X_1$, and thus we have
\begin{equation}
    \begin{aligned}
    H(X_1 | U_1) &= H(X_1, A_1 | U_1) \\
    &= H(X_1 | U_1, A_1) + H(A_1 | U_1) \\
    &\leq H(X_1 | U_1, A_1) + H(A_1) \\
    &= H(X_1 | U_1, A_1=1)p_{A_1}(1) \\
    &\quad + H(X_1 | U_1, A_1=0)p_{A_1}(0) + H(A_1) \\
    &= H(C_1 | U_1, A_1=1) p_{A_1}(1)\\
    &\quad + H(X_1 | U_1, A_1=0)p_{A_1}(0) + H(A_1) \\
    &\leq H(C_1 | U_1, A_1=1) p_{A_1}(1)\\
    &\quad + H(C_1 | U_1, A_1=0)p_{A_1}(0)\\
    &\quad + H(X_1 | U_1, A_1=0)p_{A_1}(0) + H(A_1) \\
    &= H(C_1 | U_1, A_1) \\
    &\quad + H(X_1 | U_1, A_1=0)p_{A_1}(0) + H(A_1) \\
    &\leq H(C_1 | U_1) \\
    &\quad + H(X_1 | U_1, A_1=0)p_{A_1}(0) + H(A_1)
    \end{aligned}
    \label{eq:x_bound}
\end{equation}
where the last but three line is given by Eq.~\eqref{eq:lossless_code}.

Eqs.~\eqref{eq:bottleneck_bound} and \eqref{eq:x_bound} imply that
\begin{equation}
\begin{aligned}
    I(C_1; U_1) &= H(C_1) - H(C_1 | U_1) \\
    & \leq 1 + T^{2/3} h_m \\
    &\quad + H(X_1 | U_1, A_1=0)p_{A_1}(0) + H(A_1)
    \label{eq:Hc_equality}
\end{aligned}
\end{equation}

For any $t \leq T$, $X_1(t)$ is a discrete random variable with finite support cardinality, denoted as $K$. Then we have
\begin{equation}
\begin{aligned}
    H(X_1 | U_1, A_1 = 0) &\leq \sum_{t=1}^T H(X_1(t) | U_1, A_1 = 0) \\
    & \leq T\log K
\end{aligned}
\label{x_ua_bound}
\end{equation}

On the other hand, notice that $\{X_1(t)\}$ is a stationary Markov process of order $\tau$. We have
\begin{equation}
\begin{aligned}
    \log p_{X_1}(\cdot | U_1) &= \sum_{t=1}^\tau \log p_{X_1(t)}(\cdot | U_1, X_1(1 : t-1)) \\
    & + \sum_{t=\tau+1}^T \log p_{X_1(t)}(\cdot | U_1, X_1(t-\tau : t-1)) \\
\end{aligned}
\end{equation}

From the central limit theorem for ergodic Markov process
\begin{equation}
    \lim_{T \rightarrow \infty}p_{A_1}(1) = 1, \quad \lim_{T \rightarrow \infty}p_{A_1}(0) = 0
    \label{eq:lln}
\end{equation}

Combining Eqs.~\eqref{eq:Hc_equality}, \eqref{x_ua_bound} and \eqref{eq:lln}, we have
\begin{equation}
\begin{aligned}
    \frac{1}{T} I(C_1; U_1) &\leq \frac{1}{T} (1 + T^{2/3} h_m + p_{A_1}(0)T\log K  + H(A_1)) \\
    &\rightarrow 0, \mbox{ as } T \rightarrow \infty
\end{aligned}
\end{equation}
Hence Eq.~\eqref{eq:lemma2} is proved.

Next, for $X_1 \in \mathcal{X}$, notice that $[C_1, S_1]$ is a lossless code of $[X_1, U_1]$, because
\begin{equation}
\begin{aligned}
    H(X_1, U_1) &= H(U_1) + H(X_1 | U_1) \\
    &= H(U_1) + H(C_1 | U_1) \\
    &= H(S_1) + H(C_1 | S_1) \\
    &= H(C_1, S_1)
\end{aligned}
\label{eq:lossless_code2}
\end{equation}
where the second line is from Eq~\eqref{eq:lossless_code}; the third line is from Lem.~\ref{lem:one-to-one}. Eq.~\eqref{eq:lossless_code2} implies that $[U_1, X_1]$ is fully recoverable from $[C_1, S_1]$. Therefore, there exists an optimum decoder $D^*(\cdot, \cdot)$ such that
\begin{equation}
    \hat{X}_{1 \rightarrow 1} = X_1
    \label{eq:equal_x}
\end{equation}
Combining Eqs.~\eqref{eq:lln} and \eqref{eq:equal_x}, Eq.~\eqref{eq:lemma1} is proved.

Apply $E_c(\cdot)$ to both sides, we get
\begin{equation}
    \plim_{T \rightarrow \infty} E_c(\hat{X}_{1 \rightarrow 1}) = E_c(X_1) = C_1
\end{equation}
Hence, considering $X_1$ has finite second moment, convergence with probability implies mean squared convergence, \emph{i.e.}
\begin{equation}
    \lim_{T \rightarrow \infty} \mathbb{E} [\lVert \hat{X}_{1\rightarrow1} - X_1)^2 \rVert_2^2] + \lambda \mathbb{E} [\lVert E_c(\hat{X}_{1\rightarrow1}) - C_1 \rVert_1]= 0
    \label{eq:perfect_loss}
\end{equation}
which means that $[E^*_c(\cdot), D^*(\cdot, \cdot)]$ is the asymptotic global optimizer of Eq.~\eqref{eq:loss}.
\end{proof}

Now we are ready to prove Thm~\ref{thm:guarantee}.

\begin{proof} (Thm.~\ref{thm:guarantee})
Denote $X'_2$ as speech drawn from the ground truth distribution of the converted speech, \emph{i.e.} $p_X(\cdot | U = U_2, Z = Z_1)$. Then our goal is to show that $\hat{X}_{1 \rightarrow 2}$ is assymptotically identically distributed to $X'_2$.

What we will do is bridge the two random variables by passing $X'_2$ to \algname for self-reconstruction. Namely,
\begin{equation}
    C'_2 = E^*_c(X'_2), \mbox{ and } \hat{X}'_{2\rightarrow2} = D^*(C'_2, S_2)
\end{equation}
where $E^*(\cdot)$ and $D(\cdot)$ are the optimal encoder and decoder derived in Lem.~\ref{lem:independence}.

From Lem.~\ref{lem:independence}, we know that $\hat{X}'_{2\rightarrow2} \rightarrow X'_2$ with probability. So all is left to do is to show that $\hat{X}'_{2\rightarrow2}$ is assymptotically identically distributed to $\hat{X}_{1\rightarrow2}$.

First, notice that
\begin{equation}
\begin{aligned}
    p_{C_1}(\cdot | z_1, u_2) &= p_{C_1}(\cdot | z_1) \\
    &= p_{E_c(X_1)}(\cdot | z_1) \\
    &= p_{E_c(X)}(\cdot | Z = z_1)
\end{aligned}
\label{eq:c_id1}
\end{equation}
where the first line is due to the fact that $C_1$ and $Z_1$ are both independent of $U_2$ (Recall $U_2$ is not involved in the generation process of $C_1$); the last line is from the fact that $(U_1, Z_1, X_1)$ is identically distributed to $(U, Z, X)$.

Therefore, we can show that
\begin{equation}
\begin{aligned}
    &\lim_{T\rightarrow \infty} \frac{1}{T} KL(p_{C'_2}(\cdot | z_1, u_2) || p_{E_c(X)}(\cdot | Z = z_1)) \\
    =&\lim_{T\rightarrow \infty} \frac{1}{T} KL(p_{C'_2}(\cdot | z_1, u_2) || p_{C_1}(\cdot | Z = z_1)) \\
    =& 0
\label{eq:c_id2}
\end{aligned}
\end{equation}
where the last line is given by Eq.~\eqref{eq:lemma2} of Lem.~\ref{lem:independence}.


On the other hand,
\begin{equation}
\begin{aligned}
    p_{\hat{X}_{1\rightarrow2}}(\cdot | z_1, u_2) = p_{D^*(C_1, S_2)} ( \cdot | z_1, u_2) \\
    p_{\hat{X}'_{2\rightarrow2}}(\cdot | z_1, u_2) = p_{D^*(C'_2, S_2)} ( \cdot | z_1, u_2)
\end{aligned}
\label{eq:ideal_decomp}
\end{equation}

Combining Eqs.~\eqref{eq:c_id2} and \eqref{eq:ideal_decomp}, we have
\begin{equation}
    \lim_{T \rightarrow \infty} \frac{1}{T} KL(p_{\hat{X}_{1\rightarrow2}}(\cdot | z_1, u_2) || p_{\hat{X}'_{2\rightarrow2}}(\cdot | z_1, u_2)) = 0
\end{equation}
\end{proof}

Here is a final note on Thm~\ref{thm:guarantee}. The content loss can help to constrain information capacity of the bottleneck by soft-constraining the range of each dimension of the content code, otherwise the information capacity of each bottleneck dimension can be unbounded and Thm~\ref{thm:guarantee} does not apply.

\end{document}